\newcommand{\nc}{\newcommand}
\nc{\rnc}{\renewcommand}
\nc\mnb[1]{\medskip\noindent{\bf #1}}
\newcommand{\ket}[1]{\left| #1 \right>} 
\newcommand{\vect}[1]{\bm{#1}}
\newcommand{\smat}[4]{\left(\begin{smallmatrix} #1 & #2 \\ #3 & #4 \end{smallmatrix}\right)}
\newcommand{\z}{Z}
\newcommand{\x}{X}
\newcommand{\y}{Y}
\newcommand{\R}{\mathbb{R}}
\newcommand{\Prob}{\mathbb{P}}
\newcommand{\LieA}[1]{\mathfrak{#1}}
\newcommand{\LieG}[1]{\mathcal{#1}}
\newcommand{\sr}{shrinking region}
\newcommand{\s}{ }
\newcommand{\jr}{jumping region}
\newcommand{\jrs}{jumping regions}
\newcommand{\rc}{random conjugation}
\newcommand{\rcs}{random conjugations}
\newcommand{\eq}[1]{eq.~\eqref{#1}}
\newcommand{\iden}{\mathds{1}}
\newcommand{\supp}{Appendix}
\newcommand{\comb}[2]{\begin{pmatrix} #1 \\ #2 \end{pmatrix}}
\newcommand{\supdesc}[1]{\textit{#1}\vspace{2ex}}
\newcommand{\eqs}[1]{\eq{#1}}
\newcommand{\eql}[1]{\eq{#1}}
\newtheorem{thm}{Theorem}
\newtheorem{lem}[thm]{Lemma}
\DeclareMathOperator{\poly}{poly}
\DeclareMathOperator{\polylog}{polylog}
\DeclareMathOperator{\tr}{Tr}
\DeclareMathOperator{\vol}{vol}
\begin{document}


\title{Universal Refocusing of Systematic Quantum Noise}

\author{Imdad S. B. ~Sardharwalla} \affiliation{Department of Applied Mathematics and Theoretical Physics, University of Cambridge, Cambridge, CB3 0WA, U.K.}
\author{Toby S. ~Cubitt} \affiliation{Department of Computer Science, University College London, Gower Street, London WC1E 6BT, U.K.}
\author{Aram W. ~Harrow} \affiliation{Center for Theoretical Physics, MIT, U.S.A.}
\author{Noah ~Linden} \affiliation{School of Mathematics, University of Bristol, Bristol, BS8 1TW, U.K.}

\vskip 0.25cm

\begin{abstract}
  Refocusing of a quantum system in NMR and quantum information processing can be achieved by application of short pulses according to the methods of spin echo and dynamical decoupling.
  However, these methods are strongly limited by the requirement that the evolution of the system between pulses be suitably small.
  Here we show how refocusing may be achieved for arbitrary (but time-independent) evolution of the system between pulses.
  We first illustrate the procedure with one-qubit systems, and then generalize to $d$-dimensional quantum systems.
  We also give an application of this result to quantum computation, proving a new version of the Solovay-Kitaev theorem that does not require inverse gates.
\end{abstract}

\pacs{}
\maketitle

An isolated quantum system will evolve in time according to its inherent Hamiltonian.
This can often be an undesirable effect that needs to be corrected.
Within the field of nuclear magnetic resonance (NMR), a technique
known as spin echo is often employed to correct for some kinds of
evolution by applying a particular radio-frequency pulse to the system at a certain time that causes the state of the system to `refocus' \cite{hahn1950spin,freeman1998spin}.

It is also an issue very commonly encountered within quantum
information processing
\cite{alvarez2012iterative,divincenzo2000physical,khodjasteh2005fault,viola1999dynamical,witzel2007concatenated,yang2011preserving,zhao2014dynamical,kuo2011quadratic},
where an unwanted always-on evolution leads to a coupling between two
initially isolated  systems.
One method of dealing with this, known as dynamical decoupling, is an extension of refocusing in spin echo, and involves applying several `control pulses' to the combined system over a period of time to dynamically eliminate the coupling~\cite{kuo2011quadratic}.
In addition, the sequence and timing of the control pulses is independent of both systems.

The main difficulty in using dynamical decoupling methods in general is that the method requires the joint-system evolution between pulses to be small---the larger the Hamiltonian that produces the coupling, the smaller the time interval between the pulses must be for the dynamical decoupling method to be effective, and hence the more frequent the pulses must be.
This could pose a problem in some systems with strong coupling, or in systems where pulses cannot be applied as frequently.
Thus it is clear that at some point, the dynamical decoupling methods must break down.

In Quantum Computing, this idea recasts itself as a different problem, viz.\ the problem of inverting an unknown black-box unitary operation $U$, given access to as many uses of the black-box as necessary.
If additional ancilla systems are available, this can in principle be achieved by performing full process tomography of the operator.
The results of~\cite{sheridan2009approximating} give another ancilla-assisted method for achieving this, without requiring full tomography, in which the number of control unitaries scales as a polynomial in $1/\epsilon$, where $\epsilon$ is the error in the output. 
However, in many practical scenarios, ancillas are not available.
Even when they are, carefully engineering complex interactions between ancilla systems and the system to be refocused is typically difficult or infeasible.
We will work in a much more restrictive setup, in which all control unitaries are required to act on a single system with the state space of $U$.
For example, if $U$ is a one-qubit operator, we only allow operations to be performed on that one qubit.

In this letter, we derive a universal procedure to refocus any unitary $U$ to arbitrary accuracy.
We find a sequence of unitary operations $\{R_1, \dots, R_n\}$, independent of $U$, such that
\begin{equation}
  R_1 U R_2 U \cdots U R_n U \approx \iden.\label{eq:format}
\end{equation}
More precisely, $\|R_1 U R_2 U \cdots U R_n U - \iden\| \leq \epsilon$, where the number $n$ of control unitaries $R$ only needs to scale as $n = O(\log^2(1/\epsilon))$.
Since the procedure works for arbitrary $U$, it is able to refocus completely unknown time-independent unitary dynamics---or equivalently, arbitrary, unknown, fixed Hamiltonian dynamics of any strength.

\mnb{Efficient gate approximation without inverses.}
An application of our refocusing result to quantum computation is to extend one of the central results in quantum compiling---the Solovay-Kitaev theorem~\cite{nielsen2010quantum, kitaev2002classical}---to the case when inverse gates are not included.
Informally, the original Solovay-Kitaev result proves that a universal quantum gate set \emph{that includes inverse gates} can simulate any other universal gate set to arbitrary precision $\epsilon$, with at most $\log^{3+o(1)}(1/\epsilon)$ overhead.
This is fundamental to the theory of quantum circuits and to practical quantum computation, as it shows that any universal gate set can simulate any other with low overhead.
In a circuit of size $L$ we can think of $\epsilon$ as $O(1/L)$, so changing from one universal gate set to another would increase the number of gates to at most $L\log^{3+o(1)}(L)$.
However, when inverse gates are not included, all known variants of the Solovay-Kitaev theorem \cite{dawson2005solovay,kitaev2002classical} fail.  The only previously known method of approximating the inverse of a gate $U$ was to wait until a member of the sequence $U,U^2,U^3,\ldots$ approximated $U^{-1}$, which in general required overhead $1/\epsilon^{d^2-1}$, i.e.~$1/\epsilon^3$ for qubits.
Thus, a circuit of size $L$ would turn into $\poly(L)$ gates, which is a large enough overhead to overwhelm the polynomial speedup from algorithms such as Grover's.
By using our refocusing result to efficiently approximate inverse gates, we obtain a new inverse-free version of the Solovay-Kitaev theorem: Any universal quantum gate set that includes the Pauli operators (or Weyl operators for qudits; see below) can simulate any other universal gate set to arbitrary precision $\epsilon$, with at most $\poly\log(1/\epsilon)$ overhead.
(See \supp~for proof details.)

In the remainder of the paper we describe the refocusing procedure
for a one-qubit system, and then derive the general case of $d$-dimensional systems.

\textbf{One-Qubit Unitary Noise}.
We describe here the procedure to eliminate systematic noise on one-qubit systems.
Any unitary operation $U\in\LieG{SU}(2)$ may be written in the form $U=e^{-iH}$, where the Hamiltonian $H$ is of the form $H=\vect{h}\cdot\vect{\sigma}$, where $\vect{\sigma}=(\x,\y,\z)^T$ is the vector of Pauli matrices, and $\vect{h}\in\R^3$.

We introduce the function
\begin{equation*}
	f(U)\coloneqq \x U \x \y U \y \z U \z U
\end{equation*}
which can be seen to give $\iden$ to first order in $H$ when expanded as a power series.
Thus, we expect that for $U$ within a certain distance of $\iden$, the recursive application of $f$ will reduce this distance.
This forms the basis of concatenated dynamical decoupling \cite{yang2011preserving,witzel2007concatenated,kuo2011quadratic,khodjasteh2005fault,alvarez2012iterative}, and one of the stages of our procedure.

Outside of this region, $f$ does not necessarily reduce this distance; in fact, $f$ has several fixed points and cycles.
For example, the unitary operator $\frac{1-i}{2} \smat{1}{i}{-1}{i}$ is a fixed point, and $(\frac{1-i}{2} \smat{i}{i}{-1}{1}, \frac{1-i}{2} \smat{1}{-1}{i}{i})$ is a two-cycle.
This is a key motivation for developing a randomised (rather than deterministic) protocol for refocusing.

Note that $f$ can be expressed in the form of \eq{eq:format} as $f(U)=-\x U \z U \x U \z U$.

The analysis for the one-qubit case can be computed explicitly, and we do so in the following three stages:

\begin{enumerate}
\item
  In terms of a chosen measure of distance, we lower bound the size of the neighbourhood of $\iden$ for which an application of $f$ reduces the distance to $\iden$.
  We shall call this the \textit{\sr}.
  This is the crux of concatenated dynamical decoupling, which can only be applied within this region.

\item
  We find other points in $\LieG{SU}(2)$ that are mapped exactly to $\iden$ under a single application of $f$, and hence (by continuity of $f$) determine regions that are mapped into the \sr.
  We call these \textit{\jrs}.

\item
  We apply certain random operations to our unitary and lower bound the probability of moving it into one of the \jrs.
  We call these \textit{\rcs}.
\end{enumerate}

\textit{Bounding the \sr}.---Any $U\in\LieG{SU}(2)$ can be written in the form (see \supp)
\begin{equation}
  \begin{aligned}
    &U=a\iden+ib\x+ic\y+id\z,\\
    &a^2+b^2+c^2+d^2=1, \hspace{1.5em}a,b,c,d\in \R.	\label{eq:sumsq}
  \end{aligned}
\end{equation}

Using the Hilbert-Schmidt norm $\|A\|=\frac{1}{2}\sqrt{\tr(A^\dagger A)}$, we define the
\emph{distance} between $U$ and $\iden$ to be $\varepsilon_0 \coloneqq
\|U-\iden\|=\sqrt{1-a}$.
A straightforward matrix multiplication then tells us that the distance between $f(U)$ and $\iden$ is $\varepsilon_1 \coloneqq \sqrt{8}|bd|$.
Now
\begin{equation}
  \varepsilon_1=\sqrt{8}|bd| \leq \sqrt{2}(b^2+d^2) \leq \sqrt{2}(1-a^2) \leq \sqrt{8}\varepsilon_0^2.\label{eq:1q-sr1}
\end{equation}
where the second inequality follows from \eq{eq:sumsq}.

If $\varepsilon_m$ is the distance from $\iden$ after $m$ applications of $f$, then repeated application of \eq{eq:1q-sr1} implies that $\varepsilon_m\leq \sqrt{8}^{2^m-1}\varepsilon_0^{2^m}$.
Choosing $\varepsilon_0\leq 1/4$ gives us doubly-exponential
convergence towards $\iden$ as $m$ increases, that is,
$\varepsilon_m\leq \frac{1}{\sqrt{8}} 2^{-2^{m-1}}$.

We thus define the \sr\s to be $1-a = \varepsilon_0^2 \leq 1/16$.
This is represented in Figure~\ref{fig:sphere} as region $A$.

\textit{Bounding the \jrs}.---We saw previously that
$\varepsilon_1=\sqrt{8}|bd|$. 
To simply ensure that $f(U)$ be inside the \sr, we require that
$\varepsilon_1\leq 1/4$.
 Denote the ``jumping'' region by $J \equiv f^{-1}(A)$, and observe
 that $J$ is the set of $U$ with $|bd| \leq 1/\sqrt{128}$; see Figure \ref{fig:sphere}.

\begin{figure}
  \includegraphics[width=6cm]{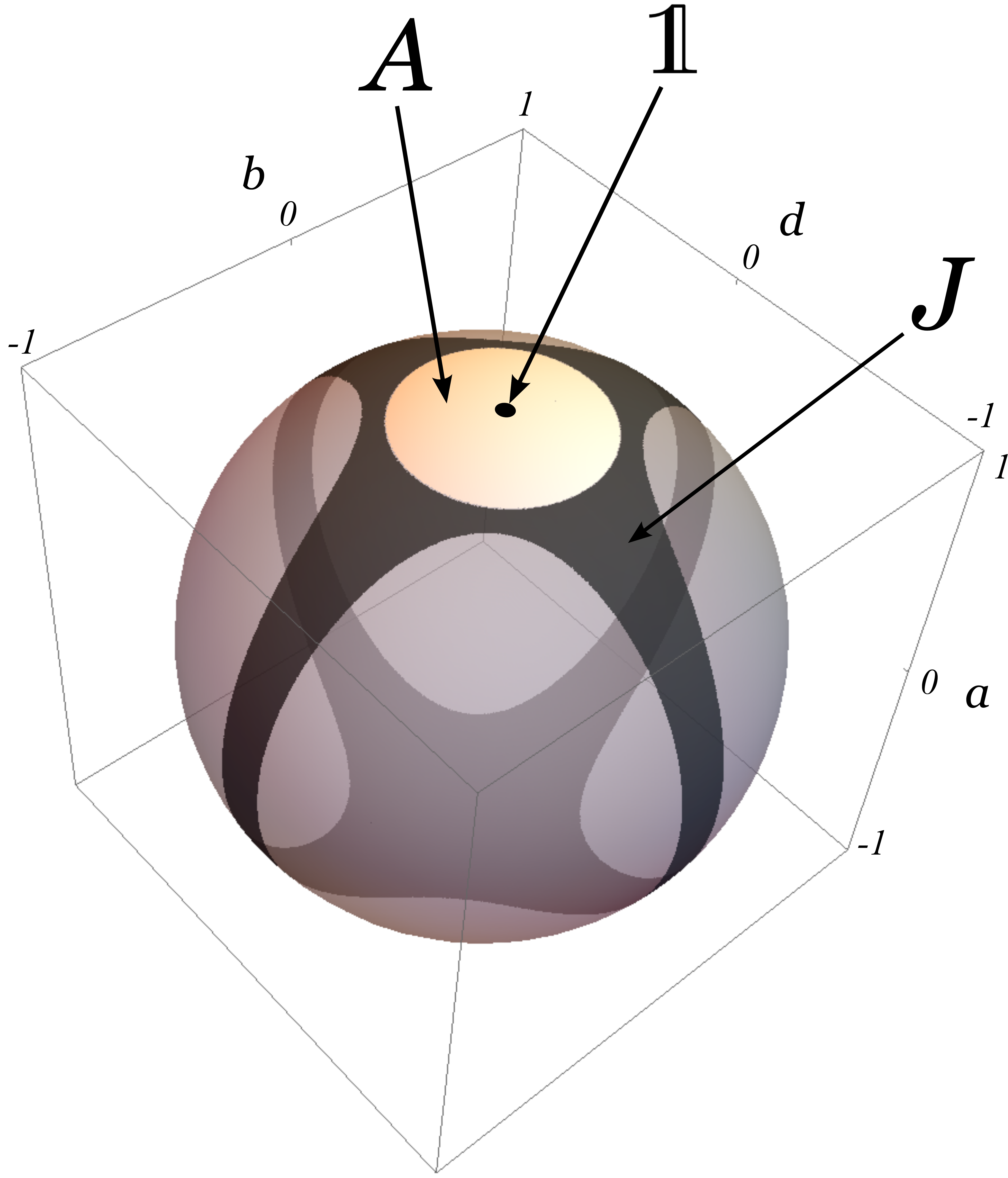}
  \caption{Universal refocusing for $U\in\LieG{SU}(2)$.
    For illustration we set $c=0$ in \protect \eq{eq:sumsq} so that the surface of the sphere represents the remaining part of $\LieG{SU}(2)$.
    $A$ represents the \sr, with $U=\iden$ marked at its center point.
    $J \equiv f^{-1}(A)$ is the \jr, for which $|bd| \leq 1/\sqrt{128}$.
    The action of a \rc\s $R=\vect{r}\cdot\vect{\sigma}$ (where, for
    this illustration, $\vect{r}=(r_1,0,r_3)$) is to reflect the
    sphere in a plane along the $a$ axis containing $\iden$, leaving the distance to $\iden$
    invariant.}
  \label{fig:sphere}
\end{figure}

\textit{Bounding the probability of landing in a \jr\s after applying a \rc}.---We now write $U$ in the form $U=a\iden+i(\vect{u}\cdot\vect{\sigma})$, where $\vect{u}=(b,c,d)^T$ and $\vect{\sigma}=(\x,\y,\z)^T$.
The operation we apply is conjugation by an operator $R=\vect{r}\cdot\vect{\sigma}$, where $\vect{r}$ is a real unit vector, and $R$ is unitary.
Then
\begin{equation*}
  U^\prime=RUR^\dagger=a\iden+i\vect{u^\prime}\cdot\vect{\sigma}.
\end{equation*}
where $\vect{u^\prime}=[2(\vect{r}\cdot \vect{u})\vect{r}-\vect{u}]=(b^\prime, c^\prime, d^\prime)^T$.
This transformation has two important properties:

\begin{itemize}
\item
  The distance from $\iden$ is invariant.
  This ensures that the unitary can never leave the \sr\s once inside it;

\item
  $\vect{u^\prime}$ is the rotation of $\vect{u}$ by $\pi$ about the vector $\vect{r}$.
  Thus choosing $\vect{r}$ to point in a uniformly random direction (according to the spherical measure on $S^2$) ensures that $\vect{u^\prime}$ also points in a similarly uniformly random direction (with $|\vect{u^\prime}|=|\vect{u}|$).
  In Figure \ref{fig:sphere}, this would be represented by a reflection of the sphere in a vertical plane.
\end{itemize}

We now lower bound the probability that $U^\prime$ is in a \jr.
To do so, we write $\vect{u^\prime}$ in spherical coordinates:
$\vect{u^\prime}=(b^\prime, c^\prime,
d^\prime)^T_{\text{cart}}=(|\vect{u}|,\theta,\phi)^T_{\text{sph}}$. 
The jumping region $J$ corresponds to the unitaries with 
\begin{align}
  |\vect{u}|^2 |\cos(\theta)\sin(\theta)\cos(\phi)| \leq \frac{1}{\sqrt{128}}.
\end{align}
Recall that $\theta,\phi$ are drawn uniformly at random from the
sphere, while $|\vect{u}|$ depends on $U$.  To eliminate this
dependence we can bound
\begin{align}
  \Prob[U'\in J] & \geq \Prob\left[|\cos(\theta)\sin(\theta)\cos(\phi)| \leq 1/\sqrt{128}\right] \nonumber \\
  & \approx 0.271\ldots \nonumber
\end{align}
The constant $0.271\ldots$ can be obtained by numerical integration,
and for notational convenience we will simply use $\Prob[U'\in J] \geq
1/4$.

\vspace{2ex}
We now introduce the function $g(U)=(\vect{r}\cdot\vect{\sigma})U(\vect{r}\cdot\vect{\sigma})^\dagger$, where each application of $g$ chooses a unit direction vector $\vect{r}$ uniformly at random according to the spherical measure on $S^2$.
Consider $(f\circ g)^{\circ l}$, i.e.\
$f$ and $g$ composed $l$ times.
In order to enter a \jr\s with probability $\geq 1-\eta$ we require
\begin{equation*}
  l\geq \frac{\log_2(1/\eta)}{\log_2(4/3)}.
\end{equation*}

Once in the \sr, we require a further $m$ steps to get within
$\epsilon := \epsilon_{l+m}$ distance of the identity, where
\begin{equation*}
  m\geq \log_2\log_2 \left(\frac 1{\sqrt{8}\epsilon}\right) + 1. 
\end{equation*}

Combining these and introducing the function $F(U) \coloneqq (f \circ g)^{\circ k}$, we see that if
\begin{equation}
  k\geq \frac{\log_2(1/\eta)}{\log_2(4/3)} + \log_2\log_2 \left(\frac
  1{\sqrt{8}\epsilon}\right)  + 1 \label{eq:k-LB}
\end{equation}
$U$ will be mapped to within $\epsilon$ distance of $\iden$ with probability $\geq 1-\eta$.
Expanding $F(U)$ gives a pulse sequence of the form $R_1 U R_2 \cdots R_n U R_{n+1}$, which can be changed into the form of \eq{eq:format} by conjugating by $R_{n+1}^\dagger$.

The number of pulses ($n$) required for the full refocusing function $F$ is the same as the number of uses of $U$, which is $4^k$.
Thus we see that the number of pulses is bounded by
\begin{equation}
  n=4^k \leq \frac{16}{\eta^5} \log_2^2\left(\frac{1}{\sqrt{8}\epsilon}\right).
\label{eq:1q-twiddle}
\end{equation}
The multiplicative factor of 16 comes from the fact that $k$ may need
to be rounded up to the nearest integer greater than the RHS of \eq{eq:k-LB}. In addition, we have rounded the power of $1/\eta$ up from $2/\log_2(4/3)\approx 4.82$ to $5$.

\textbf{Refocusing in $d$-dimensional systems}.
Though the basic idea of the one-qubit case generalizes to $d$ dimensions, it is more difficult to determine the \jrs, and not at all clear that \rcs\s can even bring arbitrary $d$-dimensional unitary operations close to these \jrs.
However, we will show there exist \jrs\s that can be reached from any unitary.

\textit{Bounding the $d$-dimensional \sr}.---We can generalize the ideas from the one-qubit case to qudits of dimension $d$, with basis $\{\ket{0}, \dots, \ket{d-1}\}$.
The operators acting on the quantum system can be described by the $(d^2-1)$-dimensional Lie algebra $\LieA{su}(d)$ with corresponding Lie group $\LieG{SU}(d)$.
Let $\{\rho_t\}_{t=0}^{d^2-2}$ be a basis for $\LieA{su}(d)$.
It is well-known \cite{pfeifer2003lie} that all $\rho_t$ are traceless and anti-Hermitian.
In addition, let us introduce the unitary Weyl operators \cite{weyl1927quantenmechanik} $\x$ and $\z$ (generalized $d$-dimensional versions of those used in the one-qubit case) by
\begin{equation}
  \x\ket{x}=\ket{x+1 \hspace{0.5em}(\mbox{mod } d)}, \z\ket{x}=\omega^x \ket{x} \label{eq:dd-weyldef}
\end{equation}
where $\omega=\exp(2\pi i/d)$ is a primitive $d$th root of unity.
With $\vect{a}=(a_1, a_2)^T$ and $a_1,a_2\in [d]=\{0,\dots,d-1\}$, we define $\sigma_{\vect{a}}=\z^{a_1}\x^{a_2}$.

We introduce the mapping $f:\LieG{G}\rightarrow\LieG{G}$, defined by
\begin{equation}
  f(U)=\prod_{\vect{a} \in [d]^2} \sigma_{\vect{a}} U \sigma^\dagger_{\vect{a}}. \label{eq:dd-fdef}
\end{equation}

Using the operator norm, we define the \emph{distance} between an operator $W$ and $\iden$ to be $\|W-\iden\|$.
If we define $\varepsilon_0 \coloneqq \|U-\iden\|$, then (see \supp) provided $\varepsilon_0\leq 1/(2\alpha)$, we find that
\begin{equation}
  \varepsilon_m < 2^{-2^m}/\alpha, \label{eq:dd-epsilon_m}
\end{equation} 
where
\begin{equation}
  \alpha=2^{d^2+1}. \label{eq:dd-alphadef}
\end{equation} 
Thus we define the \sr\s to be
\begin{equation}
  \|U-\iden\|=\varepsilon_0\leq 1/(2\alpha).\label{eq:dd-srdef}
\end{equation}

\textit{Finding the $d$-dimensional \jrs}.---We can write $U=e^H$, where $H\in\LieA{su}(d)$.
We show in the \supp~that if $U$ (and hence $H$) is diagonal, $f(U)=\iden$.
Thus the \jrs\s include the neighbourhoods of all diagonal unitaries.

\textit{Bounding the $d$-dimensional \jrs}.---Suppose we have a $W$ such that $f(W)=\iden$, and let $W^\prime=W(\iden+\delta W)$.
Use of the hybrid argument in \cite{vazirani1998power} then yields $\|f(W)-\iden\|\leq d^2\|\delta W\|$ (see \supp~for details).
Recalling \eq{eq:dd-srdef}, we therefore see that if
\begin{equation}
  \|\delta W\|\leq\delta\coloneqq\frac{1}{2 \alpha d^2}, \label{eq:dd-deltaw}
\end{equation}
then $f(W)$ will be in the \sr.

\textit{Bounding the probability of landing in a $d$-dimensional \jr\s after applying a \rc}.---Here we conjugate $U$ with a Haar random unitary $V\in\LieG{SU}(d)$ (i.e.\ uniformly random with respect to the Haar measure \cite{haar1933massbegriff}) and bound the probability that the resulting operator is close to diagonal, and thus in a \jr.
Conjugation is a useful operation to apply since
\begin{equation*}
  \|VUV^\dagger-\iden\|=\|V(U-\iden)V^\dagger\|=\|U-\iden\|
\end{equation*}
and thus, as in the one-qubit case, it leaves the distance from the identity invariant.

We note that there is at least one good choice of $V$: let $V_0$ be a unitary such that $V_0 U V_0^\dagger$ is diagonal.
While $V=V_0$ has zero probability, we argue that there is a non-negligible probability that $V$ will be close to $V_0$.
In the \supp~we show that
\begin{equation}
  \Prob [\|V-V_0\| \leq \delta] \geq (\delta/10)^{d^2-1}. \label{eq:dd-prob}
\end{equation}

\textit{Summary of the $d$-dimensional case}.---We summarize the results below:
\begin{enumerate}
\item
  Given $U\in\LieG{SU}(d)$, the \sr\s is defined (from \eq{eq:dd-srdef}) by $\varepsilon_0\leq 1/(2\alpha)$, where (from \eq{eq:dd-alphadef}) $\alpha=2^{d^2+1}$.
  Within this region, $f$ provides doubly-exponential convergence to $\iden$.
  More specifically, (from \eq{eq:dd-epsilon_m}) we have that $\varepsilon_m < 2^{-2^m}/\alpha$.
\item
  The \jrs\s include $W(\iden+\delta W)$, where $W$ is diagonal, and (from \eq{eq:dd-deltaw}) $\|\delta W\|\leq\delta=1/(2 \alpha d^2)$
\item
  Applying a \rc\s gives us (from \eq{eq:dd-prob}) a probability of at least $p \coloneqq (\delta/10)^{d^2-1}$ of landing in a \jr.
\end{enumerate}
As in the one-qubit case, we now introduce the function $g(U)=VUV^\dagger$, where each application of $g$ chooses a unitary $V$ uniformly at random according to the Haar measure on $\LieG{SU}(d)$.
Consider the function $F(U)=(f\circ g)^{\circ k}$, i.e.\
$f$ and $g$ composed $k$ times.
Following identical logic to the qubit case, we deduce that if
\begin{equation}
  k\geq \frac{\log_2\eta}{\log_2(1-p)} +\log_2 \log_2 \left(\frac{1}{\alpha\epsilon}\right) \label{eq:dd-k-LB}
\end{equation} 
then $U$ will be mapped to within $\epsilon$ distance of $\iden$ with probability $\geq 1-\eta$.
As before, $F$ can then be trivially expanded in the form of \eq{eq:format} to give the required function.

The number of pulses ($n$) required for the full refocusing function $F$ is the same as the number of uses of $U$, which is $d^{2k}$.
Thus we see that the number of pulses looks like
\begin{equation}
  n = d^{2k} \leq d^2 \left(\frac{1}{\eta}\right)^{2^{O(d^4)}} \left(\log_2\left(\frac{1}{\epsilon}\right)-d^2-1\right)^{2 \log_2 d}, \nonumber
\end{equation} 
where the multiplicative factor of $d^2$ comes from the fact that $k$ may need to be rounded up to the nearest integer greater than the RHS of \eq{eq:dd-k-LB}. For fixed $d$, we see that this is similar to \eq{eq:1q-twiddle} from the one-qubit case.
With increasing $d$, we see that the dependence on $\epsilon$ increases only modestly (owing to the decrease in size of the \sr), but the number of steps required to maintain the probability of success, $1-\eta$, increases doubly-exponentially in the Hilbert-space dimension.

Finally, we mention some interesting open questions.
One may ask whether it is possible to have sequences where $\eta=0$.
The randomness is important to our analysis.
Moreover, the function $f$ contains fixed points and cycles of various orders, and the random conjugations serve to break free of these fixed points.
Indeed, we conjecture that there are cycles of all orders.
However, it may be possible to avoid the random conjugations completely.
Numerical simulations strongly suggest these cycles form a zero-measure subset of $\LieG{SU}(d)$, and that the only stable fixed point of $f$ is $\iden$.
We leave rigorous proof of these conjectures as an interesting open problem.

ISBS thanks EPSRC for financial support.
TSC is supported by the Royal Society.
AWH was funded by NSF grants CCF-1111382 and CCF-1452616, ARO contract
W911NF-12-1-0486 and a Leverhulme Trust Visiting Professorship VP2-2013-041.
We are grateful to Jos\'e Figueroa-O'Farrill for code used in Figure~\ref{fig:sphere}.

This work was made possible through the support of grant \#48322 from the John
Templeton Foundation.
The opinions expressed in this publication are those of the authors and do not
necessarily reflect the views of the John Templeton Foundation.

\onecolumngrid
\appendix

\part*{\supp}

\section*{One-Qubit Unitary Noise}

\subsection*{Bounding the \sr}

\supdesc{This section proves the result stated in \eql{eq:sumsq}, that any unitary operation $U\in\LieG{SU}(2)$ can be expressed in the form $U=a\iden+ib\x+ic\y+id\z$, where $a^2+b^2+c^2+d^2=1, \hspace{0.5em}a,b,c,d\in \R$.}

As we made key use of this for qubit refocusing, we recall here the result that any unitary operation $U\in\LieG{SU}(2)$ may be written in the form $U=e^{i \vect{u}\cdot\vect{\sigma}}$, where $\vect{u}=(u_1,u_2,u_3)\in \R^3$, and $\vect{\sigma}=(\x, \y, \z)$.
Since $(\vect{u}\cdot\vect{\sigma})^2=|\vect{u}|^2 \iden$, we see that if $\vect{u}\neq\vect{0}$,
\begin{equation*}
U=e^{i \vect{u}\cdot \vect{\sigma}} = (\cos |\vect{u}|) \iden + i (\sin |\vect{u}|) (\vect{\hat{u}}\cdot \vect{\sigma})
\end{equation*}
where $\vect{\hat{u}}=\vect{u}/|\vect{u}|=(\hat{u}_1,\hat{u}_2,\hat{u}_3)$ is a normalized vector.
Letting $a=\cos|\vect{u}|, b=(\sin|\vect{u}|) \hat{u}_1, c=(\sin|\vect{u}|)
\hat{u}_2,$ and $d=(\sin|\vect{u}|) \hat{u}_3$, we arrive at \eql{eq:sumsq}.

\section*{Refocussing in $d$-dimensional systems}

\subsection*{Properties of $\sigma_{\boldsymbol{a}}$ and $\rho_t$} 

As detailed in the letter, $\{\rho_t\}_{t=0}^{d^2-2}$ is a traceless and anti-Hermitian basis for $\LieA{su}(d)$ (see~\cite{pfeifer2003lie}), and $\{\sigma_{\vect{a}}\}_{\vect{a}\in [d]^2}$ is the group generated by the $d$-dimensional Weyl operators~\cite{weyl1927quantenmechanik}.

Below we list several properties of these operators.

\begin{enumerate}
	\item The $\sigma_{\vect{a}}$'s form an orthogonal basis with respect to the Hilbert-Schmidt inner product\footnote{$\langle A,B \rangle = Tr(A^\dagger B)$~\cite{gohberg1981basic}} for $GL(d,\mathbb{C})$. More specifically, they satisfy $Tr(\sigma_{\vect{a}}^\dagger \sigma_{\vect{b}}) = d\delta_{\vect{a} \vect{b}}$. Note, in addition, that by setting $\vect{a}=\vect{0}$, we have that $Tr(\sigma_{\vect{b}}) = 0$ for $\vect{b} \neq \vect{0}$. \label{prop:sigmabasis}
	\item $\sigma_{\vect{a}}\sigma_{\vect{b}}=\sigma_{\vect{b}}\sigma_{\vect{a}}\omega^{[\vect{a},\vect{b}]}$, where $\omega=\exp(2\pi i/d)$ and $[\vect{a},\vect{b}]$ is the symplectic inner product\footnote{$[\vect{a},\vect{b}]=\vect{a}^T \Omega \vect{b}$, where $\Omega=\smat{0}{1}{-1}{0}$}. \label{prop:sigmacommutator}
	\item $\sum_{\vect{a}\in[d]^2} \omega^{[\vect{a},\vect{b}]}=0$ if $\vect{b}\neq\vect{0}$. \label{prop:sympcommsum}
	\item $\sum_{\vect{a}\in[d]^2} \sigma_{\vect{a}} \sigma_{\vect{b}} \sigma_{\vect{a}}^\dagger = 0$ if $\vect{b} \neq \vect{0}$. This is easily seen by combining Properties~\ref{prop:sigmacommutator} and~\ref{prop:sympcommsum}. \label{prop:sumsigma}
	\item $\sum_{\vect{a}\in[d]^2} \sigma_{\vect{a}} \rho_t \sigma_{\vect{a}}^\dagger = 0 \hspace{0.5em} \forall t$. $\rho_t$ can be expanded in the $\{\sigma_{\vect{a}}\}$ basis (Property~\ref{prop:sigmabasis}). Properties~\ref{prop:sigmabasis} and~\ref{prop:sumsigma}, and the fact that $\rho_t$ is traceless, lead to the result. \label{prop:sumrho}
\end{enumerate}

\subsection*{Bounding the $d$-dimensional \sr}

\supdesc{This section is dedicated to proving \eql{eq:dd-epsilon_m}
	($\varepsilon_m < 2^{-2^m}/\alpha$), with $\alpha$ defined as in \eql{eq:dd-alphadef}
	($\alpha=2^{d^2+1}$).}

For this analysis, we write $U = \iden + \delta U = e^H$, where $H$ is a linear combination of $\rho_t$'s. Furthermore, we impose that $\|\delta U\| \leq 1/2$.

We have that $H=\log(\iden+\delta U)$, and hence the Mercator series\footnote{If $\|A\| < 1$, $\log(\iden+A) = \sum_{k=1}^\infty (-1)^{k+1} A^k/k$} gives us that
\begin{equation}
\|H\|\leq \sum_{k=1}^\infty \frac{\|\delta U\|^k}{k} \leq \sum_{k=1}^\infty \frac{(1/2)^{k-1}}{k}\|\delta U\| < \frac{1}{1-1/2}\|\delta U\| = 2\|\delta U\| \leq 1. \label{eqS:dd-H}
\end{equation}

With $f$ as defined in \eql{eq:dd-fdef} and writing $U=\iden + \delta U$, we see that
\begin{equation*}
f(U)=\prod_{\vect{a}\in[d]^2} \sigma_{\vect{a}} (\iden + \delta U) \sigma_{\vect{a}}^\dagger
= \iden + \sum_{\vect{a}\in[d]^2} \sigma_{\vect{a}} \delta U \sigma_{\vect{a}}^\dagger + \sum_{\vect{a} < \vect{b}} \sigma_{\vect{a}} \delta U \sigma_{\vect{a}}^\dagger \cdot \sigma_{\vect{b}}
\delta U \sigma_{\vect{b}}^\dagger + \cdots, 
\end{equation*}
where $\vect{a}<\vect{b} \Leftrightarrow da_1+a_2 < db_1+b_2$, for $\vect{a}=(a_1,a_2), \vect{b}=(b_1,b_2) \in [d]^2$. 

After moving the $\iden$ to the left-hand side, we take the operator norm\footnote{$\|A\|=\sup_{\ket{\psi}\in\mathbb{C}^d, |\ket{\psi}|=1}\|A\ket{\psi}\|$} of both sides and use the triangle inequality and sub-multiplicative property\footnote{$\|AB\| \leq \|A\| \|B\| \hspace{1em} \forall A,B\in\LieG{SU}(d)$} to deduce that
\begin{equation}
\|f(U)-\iden\|
\leq\left\| \sum_{\vect{a} \in [d]^2} \sigma_{\vect{a}} \delta U \sigma_{\vect{a}}^\dagger\right\| + \comb{d^2}{2} \|\delta U\|^2 + \comb{d^2}{3} \|\delta U\|^3 + \cdots
=\left\| \sum_{\vect{a} \in [d]^2} \sigma_{\vect{a}} \delta U \sigma_{\vect{a}}^\dagger\right\| + \sum_{j=2}^{d^2} \comb{d^2}{j} \|\delta U\|^j. \label{eqS:dd-sr10}
\end{equation}

Since $\|\delta U\| \leq 1/2$, $\|\delta U\|^j \leq \|\delta U\|^2$ for $j \geq 2$.
Thus
\begin{equation}
\|f(U)-\iden\|
\leq \left\| \sum_{\vect{a} \in [d]^2} \sigma_{\vect{a}} \delta U \sigma_{\vect{a}}^\dagger\right\| + \left[\comb{d^2}{2}+\cdots+\comb{d^2}{d^2}\right]\|
\delta U\|^2
= \left\| \sum_{\vect{a} \in [d]^2} \sigma_{\vect{a}} \delta U \sigma_{\vect{a}}^\dagger\right\| + (2^{d^2}-d^2-1)\|\delta U\|^2. \nonumber
\end{equation}

Now consider the first term on the right-hand side, and note that we can write $U$ in the form $U=e^H$, where $H$ is a linear combination of $\rho_t$'s.
Hence
\begin{align}
\left\| \sum_{\vect{a} \in [d]^2} \sigma_{\vect{a}} \delta U \sigma_{\vect{a}}^\dagger\right\|
&= \left\| \sum_{\vect{a} \in [d]^2} \sigma_{\vect{a}} (e^H-\iden) \sigma_{\vect{a}}^\dagger\right\| = \left\| \sum_{\vect{a} \in [d]^2} \sigma_{\vect{a}} \left(\sum_{k=1}^\infty \frac{H^k}{k!}\right) \sigma_{\vect{a}}^\dagger\right\| \nonumber\\
&\leq \left\| \sum_{\vect{a} \in [d]^2} \sigma_{\vect{a}} H \sigma_{\vect{a}}^\dagger\right\| + \sum_{k=2}^\infty \frac{1}{k!} \left\| \sum_{\vect{a} \in [d]^2} \sigma_{\vect{a}} H^k \sigma_{\vect{a}}^\dagger\right\| \nonumber \\
&\leq \left(\sum_{k=2}^\infty\frac{1}{k!}\right) \|H\|^2 d^2 \nonumber \\
&\leq 4d^2(e-2)\|\delta U\|^2. \label{eqS:dd-sr30}
\end{align}
where the second line follows from the triangle inequality.
The first term in the second line is 0 by Property~\ref{prop:sumrho}.
The third line then follows by the triangle inequality, the sub-multiplicative property, and the fact that $\|H\| < 1$ (from \eqs{eqS:dd-H}). The final line follows from \eqs{eqS:dd-H}.
Hence we discover that
\begin{equation}
\|f(U)-\iden\| \leq (2^{d^2}+d^2(4e-9)-1) \|\delta U\|^2 < \alpha \|\delta U\|^2 \label{eqS:dd-iterativeshrink} \vspace{1ex}
\end{equation}
with $\alpha$ defined as in \eql{eq:dd-alphadef}.
This leads, as described in the Letter, to the \sr\s being defined to be
\begin{equation}
\|U-\iden\|=\epsilon_0\leq 1/(2\alpha). \label{eqS:muoveralphaless1/2}
\end{equation}

In addition, if, as in the qubit case, $\varepsilon_m$ is the distance from $\iden$ after $m$ applications of $f$, then the repeated application of \eqs{eqS:dd-iterativeshrink} implies \eql{eq:dd-epsilon_m}.

\subsection*{Finding the $d$-dimensional \jrs}

\supdesc{In this section we show that $f(U)=\iden$ if $U$ is diagonal.}

Property~\ref{prop:sigmabasis} allows us to write
\begin{equation}
H=\sum_{\vect{a}\in[d]^2, \vect{a}\neq\vect{0}} \lambda_{\vect{a}} \sigma_{\vect{a}}, \label{eqS:Hdef}
\end{equation}
where $\lambda_{\vect{a}}\in \mathbb{C}\hspace{0.5em}\forall \vect{a}$, and $\vect{a}=0$ is excluded from the sum because $H\in\LieA{su}(d)$ is traceless. In addition, if $U$ is diagonal, we have that $H$ is diagonal, and thus the only non-zero $\lambda_{\vect{a}}$'s are those corresponding to diagonal $\sigma_{\vect{a}}$'s (i.e. $\vect{a}=(a_1,0)^T$).

From \eql{eq:dd-fdef}, we have
\begin{equation}
f(U)=\prod_{\vect{c}\in[d]^2}\sigma_{\vect{c}}U\sigma_{\vect{c}}^\dagger=\prod_{\vect{c}\in[d]^2}\exp \Lambda_{\vect{c}} \label{eq:dd-fwithlambda}
\end{equation}
where
\begin{equation*}
\Lambda_{\vect{c}}=\sigma_{\vect{c}}H\sigma_{\vect{c}}^\dagger=\sum_{\vect{a}\neq\vect{0}} \omega^{[\vect{c},\vect{a}]} \lambda_{\vect{a}} \sigma_{\vect{a}},
\end{equation*}
in which we have used \eqs{eqS:Hdef} and Property~\ref{prop:sigmacommutator} to deduce the final equality. Note that the non-zero terms of the sum are diagonal, and hence all $\Lambda_{\vect{c}}$ commute. Thus using Property~\ref{prop:sympcommsum}, we see that
\begin{equation*}
f(U)=\exp\left(\sum_{\vect{c}}\Lambda_{\vect{c}}\right)
=\exp\Bigg( \sum_{\vect{a}\neq\vect{0}} \underbrace{\left(\sum_{\vect{c}}\omega^{[\vect{c},\vect{a}]}\right)}_{=0} \lambda_{\vect{a}} \sigma_{\vect{a}} \Bigg) = \iden. \nonumber
\end{equation*}

\subsection*{Bounding the size of the \jrs}

\supdesc{This section provides the proof of the bound given in \eql{eq:dd-deltaw}, which states that if an operator is within a distance $1/(2\alpha d^2)$ from a diagonal operator, it will be mapped into the \sr.}

$f(U)$ is a product of operators, containing $d^2$ instances of $U$.
The hybrid argument in \cite{vazirani1998power} then implies that
\begin{equation}
\|f(U)-f(V)\| \leq d^2 \|U-V\|. \label{eq:hybrid}
\end{equation}
Suppose that we have a $W$ such that $f(W)=\iden$, and define $W^\prime=W(\iden+\delta W)$.
\eqs{eq:hybrid} then gives
\begin{equation*}
\|f(W^\prime)-\iden\| \leq d^2\|\delta W\|.
\end{equation*}
Thus to ensure that $f(W^\prime)$ is in the \sr, we must have that
\begin{equation*}
\|\delta W\| \leq \delta \coloneqq \frac{1/(2\alpha)}{d^2}=\frac{1}{2 \alpha d^2}
\end{equation*}
as described in \eql{eq:dd-deltaw}.

\subsection*{Bounding the probability of landing in a $d$-dimensional jumping region after applying a \rc}

\supdesc{This section is dedicated to proving \eql{eq:dd-prob}, which states that a random conjugation has probability $\geq \delta^{O(d^2)}$ of sending a given matrix to a matrix within $\delta$ of being diagonal.}

As described in the Letter, we choose a unitary operator $V\in\LieG{SU}(d)$ uniformly at random according to the Haar measure \cite{haar1933massbegriff}, and lower-bound the probability that it is close to $V_0$, where $V_0\in\LieG{SU}(d)$ and $V_0 U V_0^\dagger$ is diagonal.

We first note that $\Prob [\|V-V_0\| \leq \delta]$ is independent of $V_0$, and so wlog we consider $V_0=\iden$.
Consider the map $\exp:\LieA{su}(d)\rightarrow \LieG{SU}(d)$, and let
$B_r=\{ s\in\LieA{su}(d) : \|s\| \leq r\}$, for $r\leq \pi$.
Note that $\exp(B_r)$ is a ball around $\iden$ in $\LieG{SU}(d)$ of
radius $|\exp(ir)-1| = 2\sin(r/2)$.
Thus the pre-image of the ball of
radius $\delta$ is $B_\nu$ with
\begin{equation}
\nu = 2 \arcsin(\delta/2)
\label{eq:deltanu}
\end{equation}
Now, the volume of $B_r$ is $\vol(B_r)=c r^{d^2-1}$, where $c$ is dependent upon $d$, and we are using the Euclidean metric on $\LieA{su}(d)$.

Lemma~4 in \cite{szarek1997metric} provides the result
\begin{equation*}
\frac{4}{10} \|s\| \leq \|\exp(s)-\iden\| \leq \|s\|
\end{equation*}
for $s\in \LieA{su}(d)$; the upper bound holds for all $s$, and the lower bound holds for $\|s\|\leq \pi/4$.
Thus
\begin{itemize}
	\item if $\nu \leq \pi/4$, then $\vol(\exp(B_\nu)) \geq \frac{4}{10} \vol(B_\nu) = \frac{4}{10} c \nu^{d^2-1}$; and
	\item since $\exp(B_\pi)=\LieG{SU}(d)$, we have that $\vol(\LieG{SU}(d)) \leq \vol(B_\pi) = c \pi^{d^2-1}$.
\end{itemize}
Hence the probability that a random operator $V \in \LieG{SU}(d)$ is within distance $\delta$ from $\iden$ (or any other $V_0$) is lower bounded by
\begin{equation*}
\Prob[\|V-V_0\| \leq \delta] \geq \frac{4}{10} \left(\frac{\nu}{\pi}\right)^{d^2-1}.
\end{equation*}

In addition, \eqs{eq:deltanu} implies that
\begin{equation*}
\nu = 2 \arcsin(\delta/2) \geq \delta,
\end{equation*}
hence we arrive at
\begin{equation*}
\Prob[\|V-V_0\| \leq \delta]
\geq \frac{4}{10} \left(\frac{\delta}{\pi}\right)^{d^2-1} \geq \left(\frac{\delta}{10}\right)^{d^2-1}
\end{equation*}
as given in \eql{eq:dd-prob}.

\section*{Solovay-Kitaev without inverses}

\supdesc{This section gives a full proof of the inverse-free Solovay-Kitaev theorem.}

The standard Solovay-Kitaev theorem~\cite{nielsen2010quantum} states:
\begin{thm}[Solovay-Kitaev]\label{S-K-thm}
	Let $\mathcal{G}$ be a universal quantum gate set, and let $\mathcal{G}^\dagger \coloneqq \{ V^\dagger : V\in\mathcal{G} \}$.
	For any $\epsilon > 0$ and any $U\in\LieG{SU}(d)$, there is an efficient classical algorithm that constructs a sequence of gates $V_L \cdots V_1 V_0$ with $V_i\in\mathcal{G} \cup \mathcal{G}^\dagger$ and $L=\polylog(1/\epsilon)$ such that $\|V_L \cdots V_1 V_0 - U\| \leq \epsilon$.
\end{thm}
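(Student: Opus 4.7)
The plan is to follow the classical recursive construction of Dawson and Nielsen~\cite{dawson2005solovay}. Universality of $\mathcal{G}$ guarantees that for any prescribed $\epsilon_0>0$ the words of some fixed length $L_0$ over $\mathcal{G}\cup\mathcal{G}^\dagger$ form an $\epsilon_0$-net of $\LieG{SU}(d)$; we choose $\epsilon_0$ small enough to lie in the domain of validity of the group-commutator lemma described below. The algorithm is then recursive: at level $k$, assuming that any target can be $\epsilon_{k-1}$-approximated by a word of length $L_{k-1}$, we produce an $\epsilon_k$-approximation of length $L_k$ with $\epsilon_k = C\epsilon_{k-1}^{3/2}$ and $L_k = 5L_{k-1}$. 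Doubly-exponential shrinking of the error against single-exponential growth of the length is what produces the $\polylog(1/\epsilon)$ bound.

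The technical heart of the argument is the \emph{group commutator lemma}: for every $\Delta\in\LieG{SU}(d)$ with $\|\Delta-\iden\|\leq\epsilon$ below a dimension-dependent threshold, there exist $V,W\in\LieG{SU}(d)$ with $\|V-\iden\|,\|W-\iden\|\leq C_1\sqrt{\epsilon}$ and $VWV^\dagger W^\dagger=\Delta$. This is established by writing $V=e^v,W=e^w$, expanding $VWV^\dagger W^\dagger=\iden+[v,w]+O(\|v\|\|w\|(\|v\|+\|w\|))$, and inverting for $(v,w)$ given $\log\Delta$ using the fact that the Lie bracket is surjective onto $\LieA{su}(d)$ in suitable coordinates. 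Given the lemma, the inductive step runs: approximate $U$ by a level-$(k-1)$ word $S_{k-1}$; form $\Delta=U S_{k-1}^\dagger$; decompose $\Delta=VWV^\dagger W^\dagger$; recursively approximate $V,W$ by level-$(k-1)$ words $S_V,S_W$ to accuracy $\epsilon_{k-1}$; and output $S_V S_W S_V^\dagger S_W^\dagger S_{k-1}$. A short calculation shows the total error is $O(\epsilon_{k-1}^{3/2})$ while the length is $5L_{k-1}$. Iterating $k=O(\log\log(1/\epsilon))$ times gives $L=5^k L_0=O(\log^{\log 5/\log(3/2)}(1/\epsilon))\approx\log^{3.97\ldots}(1/\epsilon)$; the sharper $\log^{3+o(1)}$ bound follows from more refined commutator decompositions, as in~\cite{kitaev2002classical}.

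The main obstacle is the group commutator lemma itself: one must pin down the dimension-dependent threshold for $\epsilon$, verify that the implicit constants are uniform, and confirm that the resulting $V,W$ depend smoothly on $\Delta$ so that they can themselves be recursively approximated. A secondary point, invisible from the theorem statement but central to what follows, is that both the initial net and the recursive output use inverse gates freely, through $\mathcal{G}^\dagger$ and through the factors $S_V^\dagger,S_W^\dagger$. The subsequent inverse-free variant will be proved by substituting for each such inverse an approximation generated by the refocusing sequence: deleting the final $U$ from $R_1 U R_2 U \cdots R_n U\approx\iden$ yields an $\epsilon$-accurate approximation of $U^\dagger$ using only forward applications of $U$ together with the Pauli/Weyl operators already present in $\mathcal{G}$, and then a careful bookkeeping of how many inverses appear at each recursion level shows that this substitution multiplies the overall length by only a further $\polylog(1/\epsilon)$ factor, preserving the $\polylog$ overhead.
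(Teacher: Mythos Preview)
The paper does not prove this theorem at all: it is quoted verbatim as the standard Solovay--Kitaev theorem with a citation to~\cite{nielsen2010quantum,kitaev2002classical}, and is then used as a black box in the proof of the inverse-free variant (Theorem~3). Your sketch is the standard Dawson--Nielsen recursive construction, which is precisely what those references contain, so in that sense you are aligned with the paper's (implicit) argument. One small technical slip: in the usual formulation the group-commutator decomposition $\Delta\approx VWV^\dagger W^\dagger$ is only approximate, not exact, and the residual error is absorbed into the $O(\epsilon_{k-1}^{3/2})$ bound; writing it as an equality overstates what the lemma gives.

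Your closing paragraph about the inverse-free variant is not part of the statement you were asked to prove, but it correctly anticipates how the paper proceeds: the paper's Lemma~\ref{sk-lem2} formalises exactly the ``delete the trailing $U$ from the refocusing sequence'' idea you describe, and Theorem~3 then substitutes $g_\epsilon(V)$ for each $V^\dagger$ appearing in the output of Theorem~\ref{S-K-thm}. The paper's bookkeeping is slightly simpler than you suggest, though: it applies Theorem~\ref{S-K-thm} once as a black box and replaces inverses only in the final output word, rather than tracking inverses through each recursion level.
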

(\noindent Note that the norm used in~\cite{nielsen2010quantum} is the trace norm, whereas we are using the operator norm. But these are equivalent up to an unimportant factor of~2.)

The following is the key lemma, using part of our refocusing result to show that inverses can be approximated efficiently:
\begin{lem}\label{sk-lem2}
	Let $\Delta$ be a $(\mu/\alpha)$-net for $\LieG{SU}(d)$ for constant $\mu < 1$ and $\alpha=2^{d^2}+d^2(4e-9)-1$.
	Let $\mathcal{W}$ be the $d$-dimensional Weyl operators.
	For any $\epsilon$ and any $U\in \LieG{SU}(d)$, there is an efficient classical algorithm that constructs a product of unitary operators $g_\epsilon(U)$ from the set $\Delta\cup\mathcal{W}$, of length $\polylog(1/\epsilon)$, for which $\|g_\epsilon(U)-U^\dagger\| = O(\epsilon)$.
\end{lem}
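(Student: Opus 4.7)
The plan is to read off $g_\epsilon(U)$ directly from the $d$-dimensional refocusing procedure already established in the letter. Recall that the procedure produces a pulse sequence of the form $R_1 U R_2 U \cdots R_n U$ which is $\epsilon$-close to $\iden$, where each $R_i$ is either a Weyl operator $\sigma_{\vect{a}} \in \mathcal{W}$ (coming from the $f$ steps) or a conjugation pair $V_j \cdot V_j^\dagger$ assembled from Haar-random $V_j\in\LieG{SU}(d)$ (coming from the $g$ steps). Multiplying this approximation on the right by $U^\dagger$ (or equivalently, dropping the last copy of $U$ and rearranging) gives $R_1 U R_2 U \cdots R_n \approx U^\dagger$ to the same precision. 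Since the total number of uses of $U$ is $\polylog(1/\epsilon)$ and each $R_i$ is itself a product of at most $O(1)$ conjugation unitaries and Weyl operators, this yields a product of $\polylog(1/\epsilon)$ factors approximating $U^\dagger$.

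The missing ingredient is that the refocusing procedure draws $V_j$ from the Haar measure, whereas the lemma only allows factors from $\Delta \cup \mathcal{W}$. First I would replace each Haar-random $V_j$ by its nearest element $\tilde V_j \in \Delta$, incurring $\|V_j-\tilde V_j\|\leq \mu/\alpha$. I would then invoke the same hybrid argument quoted in the letter (just below \eql{eq:dd-deltaw} and in the proof of \eql{eq:hybrid}) to bound the total accumulated error: the full refocusing expression contains $\polylog(1/\epsilon)$ unitary factors, so replacing each conjugation by a net element inflates the final distance from $\iden$ by at most $O(\mu/\alpha) \cdot \polylog(1/\epsilon)$. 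Taking $\mu$ a small absolute constant (and if necessary running a few extra inner iterations of $f$ to absorb the slack, since each $f$ iteration squares the distance once inside the shrinking region) ensures the discretised sequence still lies within the shrinking region and therefore still produces an $\epsilon$-approximation to $\iden$.

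For the efficiency claim I would proceed as follows. The classical algorithm is given $U$ explicitly as a $d\times d$ matrix, so it can simulate the entire refocusing pipeline numerically. For each $g$-step it draws a candidate $\tilde V_j \in \Delta$, computes the resulting matrix, and iterates; if at some step the procedure fails to produce a matrix in the relevant jumping region, the algorithm restarts that step with a fresh candidate. Since \eql{eq:dd-prob} guarantees a success probability at least $p = (\delta/10)^{d^2-1}$ per random conjugation (with $d$ constant, this is $\Omega(1)$), and since the net $\Delta$ is dense enough that a constant fraction of its elements lie within $\mu/\alpha$ of any given Haar-random $V$, the expected number of trials per step is $O(1)$ and the total number of random conjugations needed is $O(\log(1/\eta))$. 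Setting $\eta$ to an arbitrary small constant then yields an overall deterministic polynomial-time classical algorithm (via standard Las Vegas $\to$ Monte Carlo reduction and verification by direct matrix multiplication) outputting a sequence of length $\polylog(1/\epsilon)$.

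The main obstacle I anticipate is bookkeeping the error propagation cleanly: one has to verify simultaneously that (a) the discretised conjugations still drop $U$ into a jumping region with constant probability, (b) the subsequent $f$-iterations continue to shrink the distance doubly-exponentially despite the perturbation, and (c) the rearrangement from $F(U)\approx\iden$ to $g_\epsilon(U)\approx U^\dagger$ does not multiply the error by more than a constant. Each piece is routine given the estimates already set up in the letter, but combining them requires choosing the net fineness $\mu/\alpha$, the failure probability $\eta$, and the number of inner $f$-iterations $m$ in a mutually consistent way so that the total length stays $\polylog(1/\epsilon)$ with only constants depending on $d$.
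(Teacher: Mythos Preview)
Your proposal has a genuine gap and also misses the key simplification the paper exploits. The gap is this: each random conjugation in the refocusing procedure contributes a pair $V_j \,(\cdots)\, V_j^\dagger$, and you propose to replace $V_j$ by its nearest net element $\tilde V_j \in \Delta$. But then the matching factor $\tilde V_j^\dagger$ must also appear in the product, and $\Delta$ is not assumed closed under inverses---indeed, producing inverses without assuming them is precisely what the lemma is for. You could try to patch this by approximating $V_j$ and $V_j^\dagger$ \emph{independently} by two separate net elements, but then the operation is no longer a conjugation, the distance-to-$\iden$ invariance is lost, and the jumping-region/shrinking-region analysis has to be redone from scratch with these asymmetric perturbations.

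The paper sidesteps all of this by observing that the net $\Delta$ already does the job of the random conjugations: since $\Delta$ is a $(\mu/\alpha)$-net, there is some $W\in\Delta$ with $\|W-U^\dagger\|\le\mu/\alpha$, hence $\|WU-\iden\|\le\mu/\alpha$ and $WU$ is \emph{already} in the shrinking region. From there one applies only $f$ (built purely from Weyl operators), obtaining $\|f^m(WU)-\iden\|=O(\epsilon)$ with $m=O(\log\log(1/\epsilon))$; dropping the trailing $U$ from the expanded word $R_1 WU R_2 WU\cdots R_L WU$ gives $g_\epsilon(U)$ with $\|g_\epsilon(U)-U^\dagger\|=O(\epsilon)$. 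This is fully deterministic, uses only factors from $\Delta\cup\mathcal{W}$, and has length $3\cdot 4^m=\polylog(1/\epsilon)$. In short, the net is not there to discretise the Haar-random conjugations---it is there to \emph{replace} them entirely.
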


\begin{proof}
	Since $\Delta$ is a $(\mu/\alpha)$-net for $\LieG{SU}(d)$, there exists $W\in\Delta$ with $\|U^\dagger-W\|\leq \mu/\alpha$, hence $\|\iden-WU\|\leq \mu/\alpha$.
	Thus $WU$ is in the \sr.
	Let $f$ be the mapping $f:\LieG{G}\rightarrow\LieG{G}$ defined in \eql{eq:dd-fdef} (which can manifestly be computed efficiently).
	By \eql{eq:dd-epsilon_m}, $\|f^m(WU) - \iden \| \leq \mu^{2^m}/\alpha$.
	Setting $m=O(\log\log(1/\epsilon))$, we have \mbox{$\|f^m(WU) - \iden \| = O(\epsilon)$}.
	
	Now, $f^m(WU)$ is a sequence of unitary operators of the form $R_1 WU R_2 WU \cdots R_{L-1} WU R_L WU$ (where the $R_i$ are Weyl operators).
	By removing the trailing $U$ from this sequence to form the sequence $g_\epsilon(U) =\linebreak[4] R_1 WU R_2 WU \cdots WU R_L W$, we have $\|g_\epsilon(U)-U^\dagger\| = O(\epsilon)$ by unitary invariance of the norm.
	$f^m(WU)$ has length $3\times 4^m = \polylog(1/\epsilon)$, hence $g_\epsilon(U)$ also has length $\polylog(1/\epsilon)$.
\end{proof}

Putting Theorem~\ref{S-K-thm} and Lemma~\ref{sk-lem2} together, we obtain the inverse-free Solovay-Kitaev theorem:
\begin{thm}[Inverse-free Solovay-Kitaev]
	Let $\mathcal{G}$ be a universal quantum gate set---a finite set of
	elements in $SU(d)$ such that $\langle \mathcal{G} \rangle$ is dense in $SU(d)$---that contains the Weyl operators.
	For any $\epsilon > 0$ and given any $U\in\LieG{SU}(d)$, there is an efficient classical algorithm that constructs a sequence of gates $V_L \cdots V_1 V_0$ with $V_i\in\mathcal{G}$ and $L=\polylog(1/\epsilon)$ such that \mbox{$\|V_L \cdots V_1 V_0 - U\| \leq \epsilon$}.
\end{thm}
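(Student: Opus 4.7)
The strategy is to first apply the standard Solovay--Kitaev theorem (Theorem~\ref{S-K-thm}), which uses inverses, and then purge every inverse from the resulting sequence by applying Lemma~\ref{sk-lem2} gate-by-gate. I would begin by invoking Theorem~\ref{S-K-thm} on $U$ at target precision $\epsilon/2$, producing a word $V'_{L_0}\cdots V'_0$ of length $L_0=\polylog(1/\epsilon)$ in $\mathcal{G}\cup\mathcal{G}^\dagger$ within $\epsilon/2$ of $U$. For every factor $V'_i$ of the form $V^\dagger$ with $V\in\mathcal{G}$, I would invoke Lemma~\ref{sk-lem2} on $V$ at precision $\epsilon'=\epsilon/(2L_0)$, yielding a product $g_{\epsilon'}(V)$ of length $\polylog(1/\epsilon')=\polylog(1/\epsilon)$ from $\Delta\cup\mathcal{W}$ with $\|g_{\epsilon'}(V)-V^\dagger\|=O(\epsilon')$. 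Since $\mathcal{W}\subset\mathcal{G}$ by hypothesis, and (as described below) each element of $\Delta$ is itself a short positive product in $\mathcal{G}$, performing all substitutions yields a sequence in $\mathcal{G}$ whose total error is at most $\epsilon/2+L_0\cdot O(\epsilon')=O(\epsilon)$ (absorbed into $\epsilon$ by rescaling constants) and whose total length is $L_0\cdot\polylog(1/\epsilon)\cdot O(1)=\polylog(1/\epsilon)$.

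The remaining ingredient is the $(\mu/\alpha)$-net $\Delta$ required by Lemma~\ref{sk-lem2}. Because $\mu/\alpha$ depends only on $d$ and not on $\epsilon$, a net of constant cardinality suffices; moreover, applying Theorem~\ref{S-K-thm} to each candidate net point at this constant target precision gives, for each point, a word of length $O(1)$ in $\mathcal{G}\cup\mathcal{G}^\dagger$. To eliminate the remaining inverses from these short words I would use a compactness argument: in the compact group $\LieG{SU}(d)$, the powers $\{V^n\}_{n\geq 1}$ of any fixed $V\in\mathcal{G}$ accumulate by pigeonhole, so for some bounded $n_V$ the power $V^{n_V}$ lies within the desired constant precision of $V^{-1}$. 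Substituting $V^{n_V}$ for each $V^\dagger$ occurring in the short words produces a $\Delta$ whose elements are constant-length positive products in $\mathcal{G}$, at only a constant multiplicative cost in length.

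The main obstacle is precisely this construction of $\Delta$. The universal-gate-set hypothesis only asserts that $\langle\mathcal{G}\rangle$, the \emph{group} generated by $\mathcal{G}$, is dense in $\LieG{SU}(d)$; one must argue separately that the positive semigroup generated by $\mathcal{G}$ is dense as well. The compactness-plus-pigeonhole argument above supplies this, and its use is essential to the inverse-free setting since otherwise one has no access to inverses at all. Once $\Delta$ is in hand, everything else is routine triangle-inequality bookkeeping combining Theorem~\ref{S-K-thm} with Lemma~\ref{sk-lem2}.
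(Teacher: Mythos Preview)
Your proposal is correct and follows essentially the same route as the paper: invoke the standard Solovay--Kitaev theorem with inverses at precision $\epsilon/2$, then replace each occurrence of $V^\dagger$ by the $\polylog(1/\epsilon)$-length word $g_{\epsilon'}(V)$ supplied by Lemma~\ref{sk-lem2}, using a constant-precision net $\Delta$ built from $O(1)$-length positive products in $\mathcal{G}$. Your compactness/pigeonhole argument for constructing $\Delta$ is in fact more explicit than the paper's, which simply asserts that positive products of bounded length form a $(\mu/\alpha)$-net from the density of $\langle\mathcal{G}\rangle$ without spelling out the passage from group-density to semigroup-density.
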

\begin{proof}
	We wish to apply Lemma~\ref{sk-lem2} to $V^\dagger\in \mathcal{G}^\dagger$.
	Since $\mu/\alpha$ is constant, we can generate a $(\mu/\alpha)$-net, denoted $\Delta$, from constant-length products of operators from $\mathcal{G}$. One can see that constant-length products are sufficient as follows.
	Given a set of unitary operators $\mathcal{U}=\{U_1, ..., U_N\}$, let us define $w(\mathcal{U}) \coloneqq \max_{V \in \LieG{SU}(2)} \min_{U \in \mathcal{U}} \|V - U\|$.
	Let us further define $v(L) \coloneqq w(\{\mbox{set of all products of operators from $\mathcal{G}$ of length $L$}\})$.
	Thus clearly $v(L) \leq v(L-1)$.
	Also, since $\langle G\rangle$ is dense in $\LieG{SU}(d)$, $\lim_{L\to\infty} v(L) = 0$.
	In other words: for all $\delta > 0$ there exists an $L$ such that $v(L) < \delta$.
	
	Furthermore, since $\mathcal{G}$ contains the Weyl operators, Lemma~\ref{sk-lem2} allows us to construct a $\polylog(1/\epsilon)$-length product $g_\epsilon(V)$ of operators from $\mathcal{G}$ such that $\|g_\epsilon(V) - V^\dagger\| \leq C \epsilon$, for some constant~$C$.
	
	Theorem~\ref{S-K-thm} lets us construct a product of gates $V_L \cdots V_1 V_0$ with $V_i\in\mathcal{G} \cup \mathcal{G}^\dagger$ and $L=\polylog(1/\epsilon)$ such that $\|V_L \cdots V_1 V_0 - U\| \leq \epsilon/2$, where $\mathcal{G}^\dagger \coloneqq \{ V^\dagger : V\in\mathcal{G} \}$.
	We construct a new product of gates $V^\prime_{L^\prime} \cdots V^\prime_1 V^\prime_0$ with $V^\prime_i\in\mathcal{G}$, by replacing each $V_i\in \mathcal{G}^\dagger\setminus\mathcal{G}$ with $g_{\epsilon/(2LC)}(V_i^\dagger)$ (where $V_i^\dagger \in \mathcal{G}$).
	Hence $\|V^\prime_{L^\prime} \cdots V^\prime_1 V^\prime_0 - V_L \cdots V_1 V_0\| \leq \epsilon/2$.
	Then $\|V^\prime_{L^\prime} \cdots V^\prime_1 V^\prime_0 - U\| \leq \epsilon$.
	Since we have replaced at most $L=\polylog(1/\epsilon)$ gates, we see that $L^\prime = L \cdot \polylog(2LC/\epsilon) = \polylog(1/\epsilon)$.
\end{proof}

\nocite{apsrev41Control}
\bibliographystyle{apsrev4-1}
\bibliography{references}

\begin{thebibliography}{20}%
\makeatletter
\providecommand \@ifxundefined [1]{%
 \@ifx{#1\undefined}
}%
\providecommand \@ifnum [1]{%
 \ifnum #1\expandafter \@firstoftwo
 \else \expandafter \@secondoftwo
 \fi
}%
\providecommand \@ifx [1]{%
 \ifx #1\expandafter \@firstoftwo
 \else \expandafter \@secondoftwo
 \fi
}%
\providecommand \natexlab [1]{#1}%
\providecommand \enquote  [1]{``#1''}%
\providecommand \bibnamefont  [1]{#1}%
\providecommand \bibfnamefont [1]{#1}%
\providecommand \citenamefont [1]{#1}%
\providecommand \href@noop [0]{\@secondoftwo}%
\providecommand \href [0]{\begingroup \@sanitize@url \@href}%
\providecommand \@href[1]{\@@startlink{#1}\@@href}%
\providecommand \@@href[1]{\endgroup#1\@@endlink}%
\providecommand \@sanitize@url [0]{\catcode `\\12\catcode `\$12\catcode
  `\&12\catcode `\#12\catcode `\^12\catcode `\_12\catcode `\%12\relax}%
\providecommand \@@startlink[1]{}%
\providecommand \@@endlink[0]{}%
\providecommand \url  [0]{\begingroup\@sanitize@url \@url }%
\providecommand \@url [1]{\endgroup\@href {#1}{\urlprefix }}%
\providecommand \urlprefix  [0]{URL }%
\providecommand \Eprint [0]{\href }%
\providecommand \doibase [0]{http://dx.doi.org/}%
\providecommand \selectlanguage [0]{\@gobble}%
\providecommand \bibinfo  [0]{\@secondoftwo}%
\providecommand \bibfield  [0]{\@secondoftwo}%
\providecommand \translation [1]{[#1]}%
\providecommand \BibitemOpen [0]{}%
\providecommand \bibitemStop [0]{}%
\providecommand \bibitemNoStop [0]{.\EOS\space}%
\providecommand \EOS [0]{\spacefactor3000\relax}%
\providecommand \BibitemShut  [1]{\csname bibitem#1\endcsname}%
\let\auto@bib@innerbib\@empty
\bibitem [{\citenamefont {Hahn}(1950)}]{hahn1950spin}%
  \BibitemOpen
  \bibfield  {author} {\bibinfo {author} {\bibfnamefont {E.~L.}\ \bibnamefont
  {Hahn}},\ }\bibfield  {title} {\enquote {\bibinfo {title} {Spin echoes},}\
  }\href@noop {} {\bibfield  {journal} {\bibinfo  {journal} {Physical Review}\
  }\textbf {\bibinfo {volume} {80}},\ \bibinfo {pages} {580} (\bibinfo {year}
  {1950})}\BibitemShut {NoStop}%
\bibitem [{\citenamefont {Freeman}\ and\ \citenamefont
  {Minch}(1998)}]{freeman1998spin}%
  \BibitemOpen
  \bibfield  {author} {\bibinfo {author} {\bibfnamefont {R.}~\bibnamefont
  {Freeman}}\ and\ \bibinfo {author} {\bibfnamefont {M.~J.}\ \bibnamefont
  {Minch}},\ }\href@noop {} {\emph {\bibinfo {title} {Spin choreography: basic
  steps in high resolution {NMR}}}}\ (\bibinfo  {publisher} {Oxford University
  Press New York},\ \bibinfo {year} {1998})\BibitemShut {NoStop}%
\bibitem [{\citenamefont {{\'A}lvarez}\ \emph {et~al.}(2012)\citenamefont
  {{\'A}lvarez}, \citenamefont {Souza},\ and\ \citenamefont
  {Suter}}]{alvarez2012iterative}%
  \BibitemOpen
  \bibfield  {author} {\bibinfo {author} {\bibfnamefont {G.~A.}\ \bibnamefont
  {{\'A}lvarez}}, \bibinfo {author} {\bibfnamefont {A.~M.}\ \bibnamefont
  {Souza}}, \ and\ \bibinfo {author} {\bibfnamefont {D.}~\bibnamefont
  {Suter}},\ }\bibfield  {title} {\enquote {\bibinfo {title} {Iterative
  rotation scheme for robust dynamical decoupling},}\ }\href@noop {} {\bibfield
   {journal} {\bibinfo  {journal} {Physical Review A}\ }\textbf {\bibinfo
  {volume} {85}},\ \bibinfo {pages} {052324} (\bibinfo {year} {2012})},\
  \Eprint {http://arxiv.org/abs/1203.0252} {1203.0252} \BibitemShut {NoStop}%
\bibitem [{\citenamefont {DiVincenzo}\ \emph {et~al.}(2000)\citenamefont
  {DiVincenzo} \emph {et~al.}}]{divincenzo2000physical}%
  \BibitemOpen
  \bibfield  {author} {\bibinfo {author} {\bibfnamefont {D.~P.}\ \bibnamefont
  {DiVincenzo}} \emph {et~al.},\ }\bibfield  {title} {\enquote {\bibinfo
  {title} {The physical implementation of quantum computation},}\ }\href@noop
  {} {\bibfield  {journal} {\bibinfo  {journal} {arXiv preprint
  quant-ph/0002077}\ } (\bibinfo {year} {2000})},\ \Eprint
  {http://arxiv.org/abs/quant-ph/0002077} {quant-ph/0002077} \BibitemShut
  {NoStop}%
\bibitem [{\citenamefont {Khodjasteh}\ and\ \citenamefont
  {Lidar}(2005)}]{khodjasteh2005fault}%
  \BibitemOpen
  \bibfield  {author} {\bibinfo {author} {\bibfnamefont {K.}~\bibnamefont
  {Khodjasteh}}\ and\ \bibinfo {author} {\bibfnamefont {D.}~\bibnamefont
  {Lidar}},\ }\bibfield  {title} {\enquote {\bibinfo {title} {Fault-tolerant
  quantum dynamical decoupling},}\ }\href@noop {} {\bibfield  {journal}
  {\bibinfo  {journal} {Physical Review Letters}\ }\textbf {\bibinfo {volume}
  {95}},\ \bibinfo {pages} {180501} (\bibinfo {year} {2005})},\ \Eprint
  {http://arxiv.org/abs/quant-ph/0408128} {quant-ph/0408128} \BibitemShut
  {NoStop}%
\bibitem [{\citenamefont {Viola}\ \emph {et~al.}(1999)\citenamefont {Viola},
  \citenamefont {Knill},\ and\ \citenamefont {Lloyd}}]{viola1999dynamical}%
  \BibitemOpen
  \bibfield  {author} {\bibinfo {author} {\bibfnamefont {L.}~\bibnamefont
  {Viola}}, \bibinfo {author} {\bibfnamefont {E.}~\bibnamefont {Knill}}, \ and\
  \bibinfo {author} {\bibfnamefont {S.}~\bibnamefont {Lloyd}},\ }\bibfield
  {title} {\enquote {\bibinfo {title} {Dynamical decoupling of open quantum
  systems},}\ }\href@noop {} {\bibfield  {journal} {\bibinfo  {journal}
  {Physical Review Letters}\ }\textbf {\bibinfo {volume} {82}},\ \bibinfo
  {pages} {2417} (\bibinfo {year} {1999})},\ \Eprint
  {http://arxiv.org/abs/quant-ph/9809071} {quant-ph/9809071} \BibitemShut
  {NoStop}%
\bibitem [{\citenamefont {Witzel}\ and\ \citenamefont
  {Sarma}(2007)}]{witzel2007concatenated}%
  \BibitemOpen
  \bibfield  {author} {\bibinfo {author} {\bibfnamefont {W.}~\bibnamefont
  {Witzel}}\ and\ \bibinfo {author} {\bibfnamefont {S.~D.}\ \bibnamefont
  {Sarma}},\ }\bibfield  {title} {\enquote {\bibinfo {title} {Concatenated
  dynamical decoupling in a solid-state spin bath},}\ }\href@noop {} {\bibfield
   {journal} {\bibinfo  {journal} {Physical Review B}\ }\textbf {\bibinfo
  {volume} {76}},\ \bibinfo {pages} {241303} (\bibinfo {year} {2007})},\
  \Eprint {http://arxiv.org/abs/0707.1037} {0707.1037} \BibitemShut {NoStop}%
\bibitem [{\citenamefont {Yang}\ \emph {et~al.}(2011)\citenamefont {Yang},
  \citenamefont {Wang},\ and\ \citenamefont {Liu}}]{yang2011preserving}%
  \BibitemOpen
  \bibfield  {author} {\bibinfo {author} {\bibfnamefont {W.}~\bibnamefont
  {Yang}}, \bibinfo {author} {\bibfnamefont {Z.-Y.}\ \bibnamefont {Wang}}, \
  and\ \bibinfo {author} {\bibfnamefont {R.-B.}\ \bibnamefont {Liu}},\
  }\bibfield  {title} {\enquote {\bibinfo {title} {Preserving qubit coherence
  by dynamical decoupling},}\ }\href@noop {} {\bibfield  {journal} {\bibinfo
  {journal} {Frontiers of Physics in China}\ }\textbf {\bibinfo {volume} {6}},\
  \bibinfo {pages} {2--14} (\bibinfo {year} {2011})},\ \Eprint
  {http://arxiv.org/abs/1007.0623} {1007.0623} \BibitemShut {NoStop}%
\bibitem [{\citenamefont {Zhao}\ \emph {et~al.}(2014)\citenamefont {Zhao},
  \citenamefont {Wrachtrup},\ and\ \citenamefont {Liu}}]{zhao2014dynamical}%
  \BibitemOpen
  \bibfield  {author} {\bibinfo {author} {\bibfnamefont {N.}~\bibnamefont
  {Zhao}}, \bibinfo {author} {\bibfnamefont {J.}~\bibnamefont {Wrachtrup}}, \
  and\ \bibinfo {author} {\bibfnamefont {R.-B.}\ \bibnamefont {Liu}},\
  }\bibfield  {title} {\enquote {\bibinfo {title} {Dynamical decoupling design
  for identifying weakly coupled nuclear spins in a bath},}\ }\href@noop {}
  {\bibfield  {journal} {\bibinfo  {journal} {Physical Review A}\ }\textbf
  {\bibinfo {volume} {90}},\ \bibinfo {pages} {032319} (\bibinfo {year}
  {2014})},\ \Eprint {http://arxiv.org/abs/1402.2345} {1402.2345} \BibitemShut
  {NoStop}%
\bibitem [{\citenamefont {Kuo}\ and\ \citenamefont
  {Lidar}(2011)}]{kuo2011quadratic}%
  \BibitemOpen
  \bibfield  {author} {\bibinfo {author} {\bibfnamefont {W.-J.}\ \bibnamefont
  {Kuo}}\ and\ \bibinfo {author} {\bibfnamefont {D.~A.}\ \bibnamefont
  {Lidar}},\ }\bibfield  {title} {\enquote {\bibinfo {title} {Quadratic
  dynamical decoupling: Universality proof and error analysis},}\ }\href@noop
  {} {\bibfield  {journal} {\bibinfo  {journal} {Physical Review A}\ }\textbf
  {\bibinfo {volume} {84}},\ \bibinfo {pages} {042329} (\bibinfo {year}
  {2011})},\ \Eprint {http://arxiv.org/abs/1106.2151} {1106.2151} \BibitemShut
  {NoStop}%
\bibitem [{\citenamefont {Sheridan}\ \emph {et~al.}(2009)\citenamefont
  {Sheridan}, \citenamefont {Maslov},\ and\ \citenamefont
  {Mosca}}]{sheridan2009approximating}%
  \BibitemOpen
  \bibfield  {author} {\bibinfo {author} {\bibfnamefont {L.}~\bibnamefont
  {Sheridan}}, \bibinfo {author} {\bibfnamefont {D.}~\bibnamefont {Maslov}}, \
  and\ \bibinfo {author} {\bibfnamefont {M.}~\bibnamefont {Mosca}},\ }\bibfield
   {title} {\enquote {\bibinfo {title} {Approximating fractional time quantum
  evolution},}\ }\href@noop {} {\bibfield  {journal} {\bibinfo  {journal}
  {Journal of Physics A: Mathematical and Theoretical}\ }\textbf {\bibinfo
  {volume} {42}},\ \bibinfo {pages} {185302} (\bibinfo {year} {2009})},\
  \Eprint {http://arxiv.org/abs/0810.3843} {0810.3843} \BibitemShut {NoStop}%
\bibitem [{\citenamefont {Nielsen}\ and\ \citenamefont
  {Chuang}(2010)}]{nielsen2010quantum}%
  \BibitemOpen
  \bibfield  {author} {\bibinfo {author} {\bibfnamefont {M.~A.}\ \bibnamefont
  {Nielsen}}\ and\ \bibinfo {author} {\bibfnamefont {I.~L.}\ \bibnamefont
  {Chuang}},\ }\href@noop {} {\emph {\bibinfo {title} {Quantum Computation and
  Quantum Information}}}\ (\bibinfo  {publisher} {Cambridge University Press},\
  \bibinfo {year} {2010})\BibitemShut {NoStop}%
\bibitem [{\citenamefont {Kitaev}\ \emph {et~al.}(2002)\citenamefont {Kitaev},
  \citenamefont {Shen},\ and\ \citenamefont {Vyalyi}}]{kitaev2002classical}%
  \BibitemOpen
  \bibfield  {author} {\bibinfo {author} {\bibfnamefont {A.~Y.}\ \bibnamefont
  {Kitaev}}, \bibinfo {author} {\bibfnamefont {A.}~\bibnamefont {Shen}}, \ and\
  \bibinfo {author} {\bibfnamefont {M.~N.}\ \bibnamefont {Vyalyi}},\
  }\href@noop {} {\emph {\bibinfo {title} {Classical and quantum
  computation}}},\ \bibinfo {series} {Graduate Studies in Mathematics},
  Vol.~\bibinfo {volume} {47}\ (\bibinfo  {publisher} {American Mathematical
  Society Providence},\ \bibinfo {year} {2002})\BibitemShut {NoStop}%
\bibitem [{\citenamefont {Dawson}\ and\ \citenamefont
  {Nielsen}(2005)}]{dawson2005solovay}%
  \BibitemOpen
  \bibfield  {author} {\bibinfo {author} {\bibfnamefont {C.~M.}\ \bibnamefont
  {Dawson}}\ and\ \bibinfo {author} {\bibfnamefont {M.~A.}\ \bibnamefont
  {Nielsen}},\ }\bibfield  {title} {\enquote {\bibinfo {title} {The
  {S}olovay-{K}itaev algorithm},}\ }\href@noop {} {\bibfield  {journal}
  {\bibinfo  {journal} {arXiv preprint quant-ph/0505030}\ } (\bibinfo {year}
  {2005})},\ \Eprint {http://arxiv.org/abs/quant-ph/0505030} {quant-ph/0505030}
  \BibitemShut {NoStop}%
\bibitem [{\citenamefont {Pfeifer}(2003)}]{pfeifer2003lie}%
  \BibitemOpen
  \bibfield  {author} {\bibinfo {author} {\bibfnamefont {W.}~\bibnamefont
  {Pfeifer}},\ }\href@noop {} {\emph {\bibinfo {title} {The {Lie} algebras su
  (N)}}}\ (\bibinfo  {publisher} {Springer},\ \bibinfo {year}
  {2003})\BibitemShut {NoStop}%
\bibitem [{\citenamefont {Weyl}(1927)}]{weyl1927quantenmechanik}%
  \BibitemOpen
  \bibfield  {author} {\bibinfo {author} {\bibfnamefont {H.}~\bibnamefont
  {Weyl}},\ }\bibfield  {title} {\enquote {\bibinfo {title} {Quantenmechanik
  und gruppentheorie},}\ }\href@noop {} {\bibfield  {journal} {\bibinfo
  {journal} {Zeitschrift f{\"u}r Physik}\ }\textbf {\bibinfo {volume} {46}},\
  \bibinfo {pages} {1--46} (\bibinfo {year} {1927})}\BibitemShut {NoStop}%
\bibitem [{\citenamefont {Vazirani}(1998)}]{vazirani1998power}%
  \BibitemOpen
  \bibfield  {author} {\bibinfo {author} {\bibfnamefont {U.}~\bibnamefont
  {Vazirani}},\ }\bibfield  {title} {\enquote {\bibinfo {title} {On the power
  of quantum computation},}\ }\href@noop {} {\bibfield  {journal} {\bibinfo
  {journal} {Philosophical Transactions of the Royal Society A: Mathematical,
  Physical and Engineering Sciences}\ ,\ \bibinfo {pages} {1759--1767}}
  (\bibinfo {year} {1998})}\BibitemShut {NoStop}%
\bibitem [{\citenamefont {Haar}(1933)}]{haar1933massbegriff}%
  \BibitemOpen
  \bibfield  {author} {\bibinfo {author} {\bibfnamefont {A.}~\bibnamefont
  {Haar}},\ }\bibfield  {title} {\enquote {\bibinfo {title} {Der massbegriff in
  der theorie der kontinuierlichen gruppen},}\ }\href@noop {} {\bibfield
  {journal} {\bibinfo  {journal} {Annals of mathematics}\ ,\ \bibinfo {pages}
  {147--169}} (\bibinfo {year} {1933})}\BibitemShut {NoStop}%
\bibitem [{\citenamefont {Gohberg}\ and\ \citenamefont
  {Goldberg}(1981)}]{gohberg1981basic}%
  \BibitemOpen
  \bibfield  {author} {\bibinfo {author} {\bibfnamefont {I.}~\bibnamefont
  {Gohberg}}\ and\ \bibinfo {author} {\bibfnamefont {S.}~\bibnamefont
  {Goldberg}},\ }\href@noop {} {\emph {\bibinfo {title} {Basic operator
  theory}}}\ (\bibinfo  {publisher} {Boston [etc.]: Birkh{\"a}user},\ \bibinfo
  {year} {1981})\BibitemShut {NoStop}%
\bibitem [{\citenamefont {Szarek}(1997)}]{szarek1997metric}%
  \BibitemOpen
  \bibfield  {author} {\bibinfo {author} {\bibfnamefont {S.~J.}\ \bibnamefont
  {Szarek}},\ }\bibfield  {title} {\enquote {\bibinfo {title} {Metric entropy
  of homogeneous spaces},}\ }\href@noop {} {\bibfield  {journal} {\bibinfo
  {journal} {arXiv preprint math/9701213}\ } (\bibinfo {year} {1997})},\
  \Eprint {http://arxiv.org/abs/math/9701213} {math/9701213} \BibitemShut
  {NoStop}%
\end{thebibliography}%

\end{document}